\DeclareMathOperator*{\argmax}{arg\,max\ }
\newtheorem{theorem}{Theorem}
\theoremstyle{remark}
\begin{document}
\title{Joint Channel Estimation and Mixed-ADCs Allocation for Massive MIMO via Deep Learning}
\author{Liangyuan~Xu, Feifei~Gao, Ting~Zhou, Shaodan~Ma, and~Wei~Zhang
\thanks{L.~Xu and F.~Gao  are with the Department of Automation, Tsinghua University, Beijing 100084, China, and also with Beijing National Research Center for Information Science and Technology (BNRist), Beijing 100084, China (email: \protect\href{mailto:xly18@mails.tsinghua.edu.cn}{xly18@mails.tsinghua.edu.cn}; \protect\href{mailto:feifeigao@ieee.org}{feifeigao@ieee.org}).}
\thanks{T. Zhou is with the Shanghai Frontier Innovation Research Institute, Chinese Academy of Sciences, Shanghai 201210, P.R. China (email: \protect\href{mailto:zhouting@sari.ac.cn}{zhouting@sari.ac.cn}).}
\thanks{S.~Ma is with the State Key Laboratory of Internet of Things for Smart City and the Department of Electrical and Computer Engineering, University of Macau, Macao S.A.R. 999078, China (email: \protect\href{mailto:shaodanma@um.edu.mo}{shaodanma@um.edu.mo}).}
\thanks{W.~Zhang is with the School of Electrical Engineering and Telecommunications, The University of New South Wales, Sydney, NSW 2052, Australia (email: \protect\href{mailto:w.zhang@unsw.edu.au}{w.zhang@unsw.edu.au}).}
}
\maketitle
\begin{abstract}
Millimeter wave (mmWave) multi-user  massive multi-input multi-output (MIMO) is a promising technique for the next generation communication systems. 
However, the hardware cost and power consumption grow significantly as the number of radio frequency (RF) components increases, which hampers the deployment of practical massive MIMO systems.
To address this issue and further facilitate the commercialization of massive MIMO, mixed analog-to-digital converters (ADCs) architecture has been considered, where parts of conventionally assumed full-resolution ADCs are replaced by one-bit ADCs.
In this paper, we first propose a deep learning-based (DL) joint pilot design and channel estimation method for mixed-ADCs mmWave massive MIMO.
Specifically,  we devise a pilot design neural network whose weights directly represent the optimized pilots, and develop a Runge-Kutta model-driven densely connected network as the channel estimator. 
Instead of  randomly assigning the mixed-ADCs, we then design a novel antenna selection network for mixed-ADCs allocation to further improve the channel estimation accuracy.
Moreover, we adopt an autoencoder-inspired end-to-end architecture to jointly optimize the pilot design, channel estimation and mixed-ADCs allocation networks. 
Simulation results show that the proposed DL-based methods have advantages over the traditional channel estimators as well as the state-of-the-art networks.
\end{abstract}

\begin{IEEEkeywords}
MmWave massive MIMO, deep learning, channel estimation, pilot design, antenna selection, one-bit quantization, mixed-ADC.
\end{IEEEkeywords}

\section{Introduction}
\IEEEPARstart{M}{illimeter} wave (mmWave) massive multi-input multi-output (MIMO), one of the promising technology for both 5G and the next generation mobile communication systems, has attracted tremendous attention for its several appealing advantages\cite{five5G}.  By deploying a massive array with hundreds or thousands of antennas at the base station (BS), massive MIMO can improve spectral efficiency and enhance cellular coverage\cite{benefits}.
Meanwhile, with the 30-300GHz mmWave frequency band, the array inter-element spacing can be reduced significantly, which allows to pack a large scale of antenna array into a small area at the BS. 
Furthermore, with highly directional beamforming, the massive antenna array can concentrate energy  on  the sparse mmWave channel paths to facilitate spatial multiplex and increase energy efficiency \cite{mmWave}. 

Although mmWave massive MIMO has many irresistible advantages, one major obstacle in its commercialization is the financial cost and power consumption  that grow exponentially as the bit resolution of analog-to-digital converters (ADCs) increases \cite{waldenADC}. An alternative solution for this problem is to replace the high-resolution ADCs with economical low-resolution ones, e.g., single quantization bit ADCs \cite{ADCsurveyDll}.
However,  the traditional optimal pilot sequence design, channel estimation and data detection techniques are devised for finely quantized data. Hence, the nonlinear quantization errors introduced by low-resolution ADCs will make the traditional methods no longer eligible, which motivates the studies of new algorithms for mmWave massive MIMO with low-resolution ADCs \cite{HardwareSurvey}.

There are numerous design for massive MIMO systems with low-resolution ADCs, e.g., channel estimation \cite{WenDoA,AR,XLYonebit,AhmedPilot}, data detection \cite{RLDetect,YXDetect,SVMDetect}, beamforming \cite{beam,YJbeam,JCbeam} and performance analysis \cite{CManaly,DPHanaly,XLY}.
To leverage the potential gains offered by the large scale antenna array, accurate channel state information (CSI) should be obtained at the BS. However, it is challenging to acquire CSI especially when one-bit ADCs are utilized, since the the amplitude information of the received signal is lost during quantization.
In \cite{WenDoA}, the authors have addressed channel estimation of massive MIMO with one-bit ADCs by formulating the problem as the atomic norm optimization.
In \cite{AR}, the authors have proposed an amplitude retrieval (AR) algorithm for channel estimation with one-bit ADCs. This algorithm performs channel estimation by completing the lost amplitudes and recovery the direction of 
arrival (DOA). 
The authors of \cite{XLYonebit} have considered a compressed sensing based channel estimation method and a gridless angular domain sparse parameters estimation algorithm to further improve the one-bit channel estimation accuracy. 
Nevertheless, the methods therein have ignored the pilot design problem and adopted general pilot sequences, which would cause high training overhead. The work \cite{AhmedPilot} has shown that massive MIMO with one-bit ADCs can achieve the same channel estimation performance with fewer pilots under the assumption of single-path channel model. However, it is hard to solve the joint pilot design and channel estimation problem  due to the non-convex constraints introduced by the low-resolution ADCs.

Recently, numerous works have shown the successful applications of deep learning (DL) in communication systems, which includes channel estimation \cite{LYOFDM,YYWDL,1bOFDM,cGan}, pilot design \cite{PilotSIC,GZpilot,DNNCEPD,PilotPattern}, CSI feedback \cite{CSINet,XWCSI,GJJ,YHY}, data detection \cite{DeepMIMODetect,JCESD,LJYDetect,LearnSearch}, \emph{et al.}
DL algorithms exhibits superior performance in tackling the one-bit channel estimation problem that the traditional methods are unable to handle well. Specifically, the authors in \cite{1bOFDM} have devised a novel DL-based architectures for the one-bit quantized orthogonal frequency division multiplexing (OFDM) receiver to deal with the channel estimation and data detection problem. The work in \cite{DNNCEPD} has developed a deep neural network-based (DNN)  channel estimator and training signal design for low-resolution quantized MIMO systems. However the neural network is a simple fully connected one and hence exhibit poor performance. In \cite{cGan}, a conditional generative adversarial network (cGAN) has been developed to estimate the channel matrix from the one-bit quantized received signals, which outperforms traditional channel estimators and simple neural networks, e.g., naive convolutional neural network (CNN). Though DL-based approaches have shown great potential in coarsely quantized massive MIMO systems, DL-based joint pilot design and channel estimation still faces challenges, e.g., simple network architecture and poor performance.

To further improve the performance of one-bit quantized massive MIMO systems,  a mixed-ADCs architecture has been proposed in \cite{mixed}, where parts of one-bit ADCs are replaced by full-resolution ADCs. The authors in \cite{mixedXW} have derived an approximate tractable expression for the uplink achievable rate of  massive MIMO with mixed-ADCs, and the results show the good trade-offs between performance and hardware cost offered by the mixed-ADCs architecture. A data detector for mixed-ADCs massive MIMO has been devised in \cite{mixedWCK} through probabilistic Bayesian inference. In \cite{mixedLY}, fully connected neural networks have been applied  to estimate the uplink channel matrix of mixed-ADCs massive MIMO. Furthermore, the authors of \cite{mixedYXH} have modified the DL-based approach in \cite{mixedLY} and developed a two-stage channel estimation method to further reduce channel estimation error. However, an interesting question has not been answered yet: how to optimize the allocation of the mixed-ADCs to improve the performance?

In this paper, we design a DL-based approach for joint channel estimation and pilot design as well as the allocation of the mixed-ADCs in mmWave massive MIMO system.
The contributions of this work are summarized as follows:
\begin{itemize}
\item {\bf Channel Estimation:}
For the channel estimation neural network, we develop densely connected blocks to deal with the mixed-precision quantized signals.
Unlike the naive CNN, each block in the dense connection is a deep-unfolding of the 3rd order Runge-Kutta method, which can alleviate the vanishing-gradient problem and exhibit superior  performances over the state-of-the-art channel estimation algorithms. 

\item {\bf Joint Pilot Design and Channel Estimation:}
We devise a pilot design subnet to  cooperate with the aforementioned channel estimation subnetwork. By developing an autoencoder-inspired end-to-end architecture and jointly training the subnets, we can obtain  the optimized pilot and channel estimator. The non-differentiable problem introduced by one-bit quantization is addressed by using a differentiable function as an approximation.

\item {\bf Optimization of The  Mixed-ADCs Allocation:}
By applying softmax and top-$K$ functions in  the output layer of the neural network, we  generate  two  selection vectors to choose the antennas for the mixed-ADCs allocation. Then,  we develop a novel training method to solve the non-differentiable problem caused by the top-$K$ function. By taking the $K$-hot encoding constraints into the loss function and training the network with the proposed method, we then optimize the  allocation of the mixed-ADCs.  
\end{itemize}

The remainder of this paper is organized as follows. In Section \ref{sec:UL CE}, we investigate the channel estimation problem with mixed-ADCs and propose the Runge-Kutta model-driven densely connected network. Section \ref{sec:Mixed} presents the antenna selection network and the proposed training method. In Section \ref{sec:Pilot}, the pilot design network, the autoencoder-inspired end-to-end architecture and the solution for the non-differentiable problem are illustrated. Numerical results are provided in Section \ref{sec:sim}, and the performances of the proposed methods are compared with benchmark algorithms. Conclusions are made in Section \ref{sec:con}.

\noindent\textbf{Notation:} Uppercase boldface $\mathbf{X}$, lowercase boldface $\mathbf{x}$ and lowercase non-bold $x$ denote matrices, vectors and scalars, respectively. Superscripts $(\cdot)^T$,  $(\cdot)^H$, $(\cdot)^{-1}$ and $(\cdot)^\dagger$ represent transpose, Hermitian transpose, matrix inverse and pseudo inverse, respectively. We use  $\|\cdot\|_F$ and $\|\cdot\|_p$ for Frobenius norm and $\ell_p$-norm, respectively.  The $m$th element of a vector $\mathbf{a}$ is $[a]_m$, and the $(m,n)$th element of the matrix $\mathbf{A}$ is $[\mathbf{A}]_{m,n}$. Symbols $\odot$ denote the element-wise product of two matrices. The cardinality of set $\mathcal{A}$ is given by $|\mathcal{A}|$. Symbol $\emptyset$ denotes empty set. The real and imaginary parts of complex numbers are denoted by $\Re(\cdot)$ and $\Im(\cdot)$, respectively. $\jmath = \sqrt{-1}$. ${\mathop{\rm sign}\nolimits} \left( \cdot \right)$ is the sign function. The cumulative distribution function (CDF) of the standard normal distribution is denoted by $\Phi(\cdot)$.

\section{Channel Estimation} \label{sec:UL CE}
\subsection{System Model}
\begin{figure}[!tpb]
\centering
\includegraphics[width=0.46\textwidth]{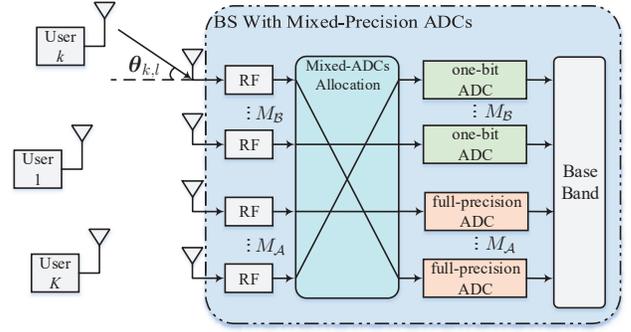}
\caption{Multi-user mmWave massive MIMO system with mixed-precision ADCs deployed at the BS.}
\label{fig:systdiagram}
\end{figure}
Consider an uplink multi-user mmWave massive MIMO system where $K$ single antenna users are served by a BS. The BS is equipped with a uniform linear array (ULA) of $M$ antennas with half-wavelength inter-element spacing. The uplink channel between the $k$th user and the BS is 
\begin{equation}\label{eq:ULkch}
\mathbf{h}_{k} = \sum\limits_{l=1}^{L_{k}}{\alpha_{k, l} \mathbf{a}\left(\theta_{k, l}\right)},
\end{equation}
where $L_{k}$ is the number of dominant paths between user-$k$ and the BS, $\alpha_{k, l}$ and $\theta_{k, l}$ denote the gain and DOA of the $l$th path respectively, and $\mathbf{a}\left(\theta_{k, l}\right) = \left[ {1,{e^{ - \jmath \pi \sin \left( {\theta _{k,l}} \right)}}, \cdots ,{e^{ - \jmath \left( M - 1 \right)\pi \sin \left( {\theta _{k,l}} \right)}}} \right]^T$ is the steering vector. 

We assume that mixed-precision ADCs (full-resolution and one-bit ADCs) are  deployed at the BS, as shown in Fig. \ref{fig:systdiagram}. During pilot symbol transmission, each user transmits a length-$N_{p}$ pilot sequence. The received signal $\mathbf{Y}\in {\mathbb C}^{M\times N_{p}}$ of the BS can be written in matrix form as 
\begin{equation}\label{eq:RxVector}
\mathbf{Y} = \mathcal{Q}\big(\mathbf{H} \mathbf{P} + \mathbf{W}\big),
\end{equation}
where ${\mathcal Q}(\cdot)$ is the element-wise quantization function, $\mathbf{H} =[\mathbf{h}_{1},\cdots,\mathbf{h}_{K}]\in {\mathbb C}^{M\times K}$ denotes the channel matrix, $\mathbf{P}\in {\mathbb C}^{K\times N_{p}}$ is the pilot matrix transmitted by $k$ users, and $\mathbf{W}\in {\mathbb C}^{M\times N_{p}}$ is the additive Gaussian noise with zero mean and variance $\sigma^2$. With mixed-precision ADCs, ${\mathcal Q}(\cdot)$ has different forms for full-resolution and one-bit ADCs respectively. If the RF chain is followed by full-resolution ADC, then ${\mathcal Q}(\cdot)$ is an identity function, and ${\mathcal Q}(\cdot)=\mathrm{sign}(\Re(\cdot)) + \jmath\mathrm{sign}(\Im(\cdot))$ holds for the case of one-bit ADC. Hence, \eqref{eq:RxVector} can be rewritten as 
\begin{equation}\label{eq:MixedQ}
[\mathbf{Y}]_{m,n} = \left\{\begin{array}{ll}
{[\mathbf{V}]_{m,n},} & m \in \mathcal{A} \\
{\mathrm{sign}\left(\Re\left([\mathbf{V}]_{m,n}\right)\right) + \jmath\mathrm{sign}(\Im([\mathbf{V}]_{m,n}))}, & m \in \mathcal{B}
\end{array}\right. \notag
\end{equation}
where $\mathbf{V} = \mathbf{H} \mathbf{P} + \mathbf{W}$, $\mathcal{A}$ and $\mathcal{B}$ are the index sets of RF chains followed by full-resolution and one-bit ADCs, respectively. Moreover, there are $\mathcal{A} \cup \mathcal{B}=\{1,2, \ldots, M\}$, $\mathcal{A} \cap \mathcal{B}=\emptyset$, $|\mathcal{A}|=M_{\mathcal{A}}$ and $|\mathcal{B}|=M_{\mathcal{B}}=M-M_{\mathcal{A}}$.

\subsection{The Proposed Channel Estimation Network}\label{subsec:CENet}
Assume that the pilot matrix $\mathbf{P}$ as well as the quantization function ${\mathcal Q}(\cdot)$ are perfectly known by the BS. Then, the BS can  recover the channel matrix $\mathbf{H}$ from the coarsely quantized received signal $\mathbf{Y}$ by traditional channel estimation techniques, e.g., the linear minimum mean square error (LMMSE) estimation method \cite{LYZ}, the near maximum likelihood estimator \cite{nML}, and the approximate message passing approach \cite{MJH,XLYonebit}.  However, the traditional channel estimators need long pilot sequences and suffer from performance degradation due to the coarse nonlinear quantization error. 

\begin{figure}[!tbp]
\centering
\includegraphics[width=0.46\textwidth]{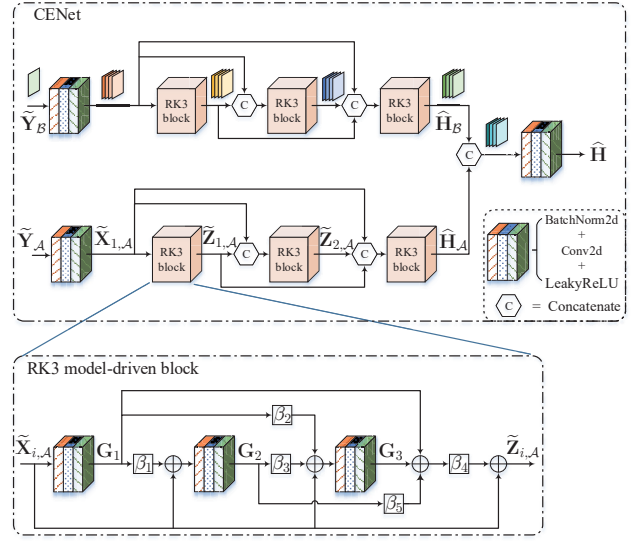}
\caption{The architecture of the proposed channel estimation network (CENet).}
\label{fig:CENet}
\end{figure}

On the other hand, DNNs are powerful tools to learning complex and latent relationship among data. Therefore, we resort to DL-based method to improve the channel estimation performance and reduce the pilot transmission overhead.
The architecture of the proposed channel estimation network (CENet) is depicted in Fig. \ref{fig:CENet}. The real and imaginary parts of the received signal $\mathbf{Y}$ are stacked as the input of CENet, which is denoted by $\widetilde{\mathbf{Y}}\in {\mathbb R}^{2M\times N_{p}}$:
\begin{equation}\label{eq:RxVectorReal}
\widetilde{\mathbf{Y}} = \begin{bmatrix} \Re\left(\mathbf{Y}\right)  \\ \Im\left(\mathbf{Y}\right) \end{bmatrix} \triangleq \mathcal{Q}\big( \widetilde{\mathbf{H}} \widetilde{\mathbf{P}} + \widetilde{\mathbf{W}} \big),
\end{equation}
where 
\begin{equation}
\widetilde{\mathbf{H}} = \begin{bmatrix} \Re\left(\mathbf{H}\right) &  -\Im\left(\mathbf{H}\right) \\ \Im\left(\mathbf{H}\right) & \Re\left(\mathbf{H}\right) \end{bmatrix},
\widetilde{\mathbf{P}} = \begin{bmatrix} \Re\left(\mathbf{P}\right)  \\ \Im\left(\mathbf{P}\right) \end{bmatrix},
\widetilde{\mathbf{W}} = \begin{bmatrix} \Re\left(\mathbf{W}\right)  \\ \Im\left(\mathbf{W}\right) \end{bmatrix}, \notag
\end{equation}
and ${\mathcal Q}(\cdot)$ reduces to real-valued element-wise quantization function. Next, we split $\widetilde{\mathbf{Y}}$ based on the quantization precision for the data processing stages, since it is more effective to feed data with different precision into different networks. We generate two antenna selection vectors $\mathbf{a}\in {\mathbb R}^{M}$ and $\mathbf{b}\in {\mathbb R}^{M}$ based on $\mathcal{A}$ and $\mathcal{B}$, respectively. The elements of $\mathbf{a}$ indexed by $\mathcal{A}$ are 1, and the remaining elements are 0. The same policy is applied for $\mathbf{b}$ and $\mathcal{B}$, which yields
\begin{equation}\label{eq:aandb}
[\mathbf{a}]_{m} = \left\{\begin{array}{ll}
1, & m \in \mathcal{A} \\
0, & m \notin \mathcal{A}
\end{array}\right. , \ 
[\mathbf{b}]_{m} = 1 - [\mathbf{a}]_{m} = \left\{\begin{array}{ll}
1, & m \in \mathcal{B} \\
0, & m \notin \mathcal{B}
\end{array}\right. .
\end{equation}
Then, by broadcasting $\mathbf{a}$ and $\mathbf{b}$ into matrices and element-wisely multiplied by the input of CENet $\widetilde{\mathbf{Y}}$, we obtain two matrices $\widetilde{\mathbf{Y}}_{\mathcal{A}}\in {\mathbb R}^{2M\times N_{p}}$ and $\widetilde{\mathbf{Y}}_{\mathcal{B}}\in {\mathbb R}^{2M\times N_{p}}$ respectively, which yields
\begin{equation}\label{eq:YaandYb}
\widetilde{\mathbf{Y}}_{\mathcal{A}} = \begin{bmatrix} \mathbf{a} &  \cdots & \mathbf{a} \\ \mathbf{a} &  \cdots & \mathbf{a} \end{bmatrix} \odot \widetilde{\mathbf{Y}}, \quad
\widetilde{\mathbf{Y}}_{\mathcal{B}} = \begin{bmatrix} \mathbf{b} &  \cdots & \mathbf{b} \\ \mathbf{b} &  \cdots & \mathbf{b} \end{bmatrix} \odot \widetilde{\mathbf{Y}}. 
\end{equation}
The matrices $\widetilde{\mathbf{Y}}_{\mathcal{A}}$ and $\widetilde{\mathbf{Y}}_{\mathcal{B}}$ contain full-resolution and one-bit quantized data respectively, and are fed into two parallel subnets of CENet. Since the two subnets share the same architecture, we use subscript-${\mathcal{A}}/{\mathcal{B}}$ to differentiate the parameters. 
In the following context, we will only focus on the parameters with subscript-${\mathcal{A}}$ for ease of notation, while  the similar discussion holds for  the parameters with subscript-${\mathcal{B}}$.

As illustrated in Fig. \ref{fig:CENet},  the first part of the subnet is a block to resize $\widetilde{\mathbf{Y}}_{\mathcal{A}}$ into the same shape of $\widetilde{\mathbf{H}}\in {\mathbb R}^{2M\times K}$, which involves batch normalization layer, 2D convolutional layer and LeakyReLU  activation layer. Note that the output size of this resizing block is $C_{1,\mathcal{A}}\times2M\times K$ where $C_{1,\mathcal{A}}$ is output features of the convolutional layers. Then, as depicted in Fig. \ref{fig:CENet}, the $C_{1,\mathcal{A}}\times2M\times K$ data is fed into densely connected blocks, which is inspired by DenseNet \cite{DenseNet}. Specifically, for each block, the outputs of all preceding blocks are concatenated along the first dimension and are used as inputs, while its outputs are passed to all subsequent blocks. The dense connection architecture has several advantages, e.g., requiring less computation to achieve high performance, and alleviating the vanishing-gradient problem especially for the one-bit quantized data. Nevertheless, the input size of the last block grows exponentially as the depth of the dense connection increases, which is computationally intensive and prevents us from using too many densely connected blocks, especially for massive MIMO scenario. The potential idea to address this issue is to use few densely connected blocks (3 blocks for our case) while design the inside architecture of each block, as illustrated in the following context.

Denote the input and output of the $i$th densely connected block as $\widetilde{\mathbf{X}}_{i,{\mathcal{A}}}$ and $\widetilde{\mathbf{Z}}_{i,{\mathcal{A}}}$, respectively. Due to the dense connection structure, we find that $\widetilde{\mathbf{X}}_{i,{\mathcal{A}}}$ and $\widetilde{\mathbf{Z}}_{i,{\mathcal{A}}}$ should have the same size as $C_{i,\mathcal{A}}\times2M\times K$. Therefore, the block can be interpreted as a decoding module that is widely used in computer vision tasks such as image super-resolution. Inspired by the super-resolution work in \cite{ODE}, we heuristically treat the decoding process as a dynamical model. Then, we adopt system of ordinary differential equations (ODE) to formulate the dynamical model, which yields
\begin{equation}\label{eq:ODE}
\left\{\begin{aligned}
 \frac{ \mathrm{d} \widetilde{\mathbf{X}}_{i}(t) } { \mathrm{d} t} &= f\left(\widetilde{\mathbf{X}}_{i}(t), t \right)  \\
 \widetilde{\mathbf{X}}_{i}(t_{0} ) &= \widetilde{\mathbf{X}}_{i,{\mathcal{A}}} \\
 \widetilde{\mathbf{X}}_{i}(t_{N}) &= \widetilde{\mathbf{Z}}_{i,{\mathcal{A}}}
\end{aligned}\right. ,
\end{equation}
where $f(\cdot,t):{\mathbb R}^{C_{i,\mathcal{A}}\times2M\times K} \rightarrow {\mathbb R}^{C_{i,\mathcal{A}}\times2M\times K}$ is the time dependence function, the initial value is $\widetilde{\mathbf{X}}_{i}(t_{0} )= \widetilde{\mathbf{X}}_{i,{\mathcal{A}}}$, and the final condition of the dynamical process is the decoded result as $\widetilde{\mathbf{X}}_{i}(t_{N}) = \widetilde{\mathbf{Z}}_{i,{\mathcal{A}}}$. 
To find the approximate solution of \eqref{eq:ODE}, we consider an iterative approach named as Runge-Kutta method to numerically solve ODEs with initial conditions. Specifically, the $3$rd order Runge-Kutta method for \eqref{eq:ODE} can be written as
\begin{equation}\label{eq:3rdRunge-Kutta}
\begin{aligned}
&\widetilde{\mathbf{X}}_{i}(t_{n+1}) = \widetilde{\mathbf{X}}_{i}(t_{n}) + \frac{1}{6}( \mathbf{S}_{1} + 4\mathbf{S}_{2} + \mathbf{S}_{3} )  \\
&\mathbf{S}_{1} = \mu f\left( \widetilde{\mathbf{X}}_{i}(t_{n}), t_{n} \right)  \\
&\mathbf{S}_{2} = \mu f\left( \widetilde{\mathbf{X}}_{i}(t_{n}) + \frac{1}{2} \mathbf{S}_{1}, t_{n} + \frac{1}{2} \mu \right)   \\
&\mathbf{S}_{3} = \mu f\left( \widetilde{\mathbf{X}}_{i}(t_{n}) - \mathbf{S}_{1} + 2 \mathbf{S}_{2}, t_{n} + \mu \right), 
\end{aligned}
\end{equation}
where $\mu$ is the step size. 

Runge-Kutta method with higher order and smaller step size could provide better approximation for the solution of \eqref{eq:ODE}. Nevertheless, we cannot directly implement the Runge-Kutta method, since the function $f(\cdot)$ is unknown and needs to be designed as well. Then, for each block in the dense connection, neural networks are utilized to replace the unknown function $f(\cdot)$ and unfold the $3$rd order Runge-Kutta method into multi-layers structure, where a number of trainable parameters are introduced to replace the coefficients in \eqref{eq:3rdRunge-Kutta}. This technique is  called deep-unfolding or model-driven neural network and has also been used in \cite{JCESD,LJYDetect}. The proposed blocks are named as RK3 model-driven blocks, and the input-output relations of the $i$th densely connected RK3 model-driven block can be described as 
\begin{IEEEeqnarray}{l}
\widetilde{\mathbf{Z}}_{i,{\mathcal{A}}} = \widetilde{\mathbf{X}}_{i,{\mathcal{A}}} + \beta_{4} \left( \mathbf{G}_{1} + \beta_{5}\mathbf{G}_{2} + \mathbf{G}_{3} \right) \\
\mathbf{G}_{1} = \phi_{1}\left( \widetilde{\mathbf{X}}_{i,{\mathcal{A}}}, \boldsymbol{\Omega}_{1} \right) \\
\mathbf{G}_{2} = \phi_{2}\left( \widetilde{\mathbf{X}}_{i,{\mathcal{A}}} + \beta_{1} \mathbf{G}_{1}, \boldsymbol{\Omega}_{2} \right)  \\
\mathbf{G}_{3} = \phi_{3}\left( \widetilde{\mathbf{X}}_{i,{\mathcal{A}}} + \beta_{2} \mathbf{G}_{1} + \beta_{3} \mathbf{G}_{2}, \boldsymbol{\Omega}_{3} \right),
\end{IEEEeqnarray}
where $\beta_{1}$-$\beta_{5}$ are introduced trainable parameters, $\phi(\cdot, \boldsymbol{\Omega})$ is the input-output function of a neural network (involving batch normalization layer, 2D convolutional layer and LeakyReLU  activation layer). Moreover, $\boldsymbol{\Omega}$ denotes the weights and bias of the network. The structures of the RK3 model-driven blocks are illustrated in Fig. \ref{fig:CENet}.

By feeding the matrices $\widetilde{\mathbf{Y}}_{\mathcal{A}}$ and $\widetilde{\mathbf{Y}}_{\mathcal{B}}$ into two subnets consisted of densely connected RK3 model-driven blocks, we obtain two outputs denoted by $\widehat{\mathbf{H}}_{\mathcal{A}}\in{\mathbb R}^{C_{\mathcal{A}}\times2M\times K}$ and $\widehat{\mathbf{H}}_{\mathcal{B}}\in{\mathbb R}^{C_{\mathcal{B}}\times2M\times K}$, respectively. According to the features of dense connection, we know $C_{\mathcal{A}} = \sum_{i=1}^{3} C_{i,\mathcal{A}}$ and $C_{\mathcal{B}} = \sum_{i=1}^{3} C_{i,\mathcal{B}}$. Since $\widehat{\mathbf{H}}_{\mathcal{A}}$ and $\widehat{\mathbf{H}}_{\mathcal{B}}$ contain different information and have multiple features, we apply feature extraction  by concatenating $\widehat{\mathbf{H}}_{\mathcal{A}}$ and $\widehat{\mathbf{H}}_{\mathcal{B}}$ along the first dimension as the input of a fusion network. Ultimately, we obtain the estimated channel matrix $\widehat{\mathbf{H}}\in{\mathbb R}^{2M\times K}$ from the output of the fusion network. The structure of the fusion network and the whole architecture of CENet are depicted in Fig. \ref{fig:CENet}. 

The loss function of the proposed CENet is the normalized mean square error (NMSE) of the channel estimation, which is given by
\begin{equation}
\mathcal{L}_{\rm CENet} \left( \boldsymbol{\Theta}_{\rm CENet} \right) =  \frac{1}{VK} \sum_{v=1}^{V}\sum_{k=1}^{K} \frac{\left\|\tilde{\mathbf{h}}^{(v)}_{k}-\hat{\mathbf{h}}^{(v)}_{k}\right\|^{2}_{2}}{\left\|\tilde{\mathbf{h}}^{(v)}_{k}\right\|^{2}_{2}} ,
\end{equation}
where $\boldsymbol{\Theta}_{\rm CENet}$ contains all the trainable parameters of CENet, $V$ is the number of samples in each training batch, $\tilde{\mathbf{h}}_{k}$ and $\hat{\mathbf{h}}_{k}$ are the $k$th column vectors of $\widetilde{\mathbf{H}}$ and $\widehat{\mathbf{H}}$ respectively, and the superscript-($v$) is the index number in the training batch.

\section{Mixed-Precision ADCs Allocation} \label{sec:Mixed}
As mentioned in \eqref{eq:RxVector}, the function $\mathcal{Q}(\cdot)$ has multi-forms since mixed-resolution ADCs are deployed at the BS. Therefore, we next address the mixed-precision ADCs allocation problem: given the number of full-resolution and one-bit ADCs, how to maximize the channel estimation performance by properly assigning the mixed-precision ADCs to different antennas. Specifically, the vectors $\mathbf{a}$ and $\mathbf{b}$ should be optimized under the constraints $|\mathcal{A}|=M_{\mathcal{A}}$, $\mathcal{A} \cup \mathcal{B}=\{1,2, \ldots, M\}$, $\mathcal{A} \cap \mathcal{B}=\emptyset$ and \eqref{eq:aandb}. The joint mixed-precision ADCs allocation and channel estimation problem can be written as
\begin{equation}\label{eq:mixedADCopt}
\begin{array}{cl} 
{\min\limits_{\psi(\cdot), \mathcal{A}, \mathcal{B}}} & \mathbb{E}\left\{ \frac{1}{K}\sum_{k=1}^{K}\frac{\left\|\tilde{\mathbf{h}}_{k}-\hat{\mathbf{h}}_{k}\right\|^{2}_{2}}{\left\|\tilde{\mathbf{h}}_{k}\right\|^{2}_{2}} \right\} \\
{\text{ s.t.}} & [\hat{\mathbf{h}}_{1},\cdots,\hat{\mathbf{h}}_{K}] = \psi (\widetilde{\mathbf{P}}, \widetilde{\mathbf{Y}}_{\mathcal{A}}, \widetilde{\mathbf{Y}}_{\mathcal{B}}),\\
 & \eqref{eq:RxVectorReal}, \ \eqref{eq:aandb}\ {\rm and}\ \eqref{eq:YaandYb}, \\
 & |\mathcal{A}|=M_{\mathcal{A}},\ \mathcal{A} \cap \mathcal{B}=\emptyset, \\
 & \mathcal{A} \cup \mathcal{B}=\{1,2, \ldots, M\}, \\
\end{array}
\end{equation}
where $\psi(\cdot)$ is the channel estimator. A common way to solve \eqref{eq:mixedADCopt} is alternating optimization under the assumption that $\psi(\cdot)$ is linear, e.g., LMMSE estimator. However, the constraints in \eqref{eq:mixedADCopt} are non-convex, and the two subproblems  (one subproblem of channel estimator and the other of mixed-ADCs allocation) are still non-convex. Moreover, the subproblem of mixed-ADCs allocation is a combinatorial optimization problem, and it is hard to find the optimal solution of \eqref{eq:mixedADCopt} with traditional approaches.

Inspired by the idea of alternating optimization, we utilize two subnetworks to solve the two subproblems of  \eqref{eq:mixedADCopt}, and jointly optimize the two subnetworks. In particular, the CENet proposed in Section \ref{subsec:CENet} is adopted to find the nonlinear channel estimator $\psi(\cdot)$, and then a subnet named by selection network (SELNet) is designed to optimize the two  antenna selection vectors $\mathbf{a}$ and $\mathbf{b}$. The connections and cooperations between CENet and SELNet are depicted in Fig. \ref{fig:SELNetCENet}. Detailed design for SELNet is given in the following context.

\begin{figure}[!tbp]
\centering
\includegraphics[width=0.46\textwidth]{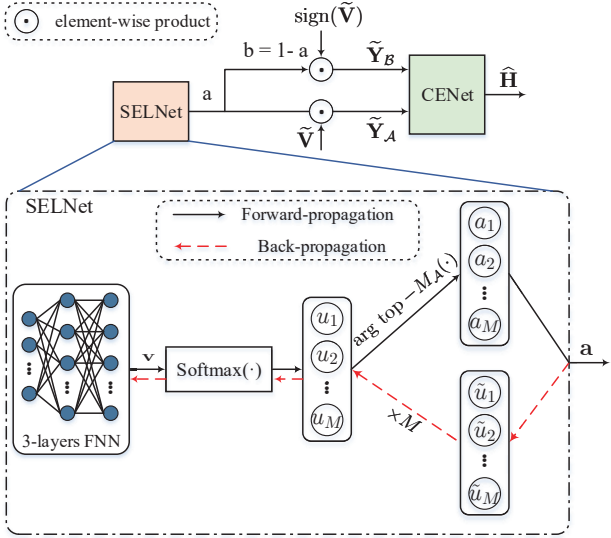}
\caption{The connections between CENet and SELNet, and the architecture of SELNet.}
\label{fig:SELNetCENet}
\end{figure}

From $|\mathcal{A}|=M_{\mathcal{A}}$, $\mathcal{A} \cup \mathcal{B}=\{1,2, \ldots, M\}$, $\mathcal{A} \cap \mathcal{B}=\emptyset$ and \eqref{eq:aandb}, we find that the vector $\mathbf{a}$ is $M_{\mathcal{A}}$-sparse and  $M_{\mathcal{A}}$-hot encoding. Therefore, the key of SELNet is to address the $M_{\mathcal{A}}$-hot encoding constraint and generate $\mathbf{a}$ with differentiable functions, which guarantees the successful back-propagation  during training stage. Specifically, SELNet first generates a probability distribution vector $\mathbf{u}\in\mathbb{R}^{M}$ with \emph{softmax} function and yields
\begin{equation}
[\mathbf{u}]_{m} = [\operatorname{softmax}(\mathbf{v})]_{m} = \frac{\exp\left([\mathbf{v}]_{m}\right)}{\sum_{i=1}^{M} \exp\left([\mathbf{v}]_{i}\right)},
\end{equation}
where the vector $\mathbf{v}\in\mathbb{R}^{M}$ is the output of the first block (A FNN composed of three layers) of SELNet as shown in Fig.~\ref{fig:SELNetCENet}. Note that each element of $\mathbf{u}$ lies in the interval $(0,1)$, and $[\mathbf{u}]_{m}$ can be interpreted as the probability of $[\mathbf{a}]_{m}$ being 1. Hence, a straightforward way to obtain the vector $\mathbf{a}$ is to find the $M_{\mathcal{A}}$ largest elements of $\mathbf{u}$ and set the corresponding elements in the vector $\mathbf{a}$ as 1, i.e.,
\begin{equation}\label{eq:topk}
\mathcal{A} = \operatorname{arg\ top}-M_{\mathcal{A}}(\mathbf{u}), \quad 
[\mathbf{a}]_{m} = \left\{\begin{array}{ll}
1, & m \in \mathcal{A} \\
0, & m \notin \mathcal{A}
\end{array}\right. ,
\end{equation}
where $\operatorname{arg\ top}-M_{\mathcal{A}}$ returns the indices of the $M_{\mathcal{A}}$ largest elements. Then, we can obtain the vector $\mathbf{b}$ by $\mathbf{b}=\mathbf{1} - \mathbf{a}$, as illustrated in Fig. \ref{fig:SELNetCENet}.

With $\operatorname{arg\ top}-M_{\mathcal{A}}$ operator, SELNet now generates the vectors $\mathbf{a}$ and $\mathbf{b}$. However, the operator in \eqref{eq:topk} is non-differentiable, which  stymies the back-propagation algorithm when training the networks. As a result, the neural layers before this non-differentiable operator cannot be trained by the back-propagation algorithm. To tackle this issue, we use \eqref{eq:topk} for forward-propagation while replace \eqref{eq:topk} with a differentiable operator during the back-propagation. Such replacement becomes exact when some constraints  are  satisfied. Specifically, during back-propagation, we replace the vector $\mathbf{a}$ with $\tilde{\mathbf{u}}$  defined as
\begin{equation}\label{eq:apprxtopk}
[\tilde{\mathbf{u}}]_{m} = M_{\mathcal{A}} [\mathbf{u}]_{m} = M_{\mathcal{A}} \frac{\exp\left([\mathbf{v}]_{m}\right)}{\sum_{i=1}^{M} \exp\left([\mathbf{v}]_{i}\right)},
\end{equation}
where $[\tilde{\mathbf{u}}]_{m}$ lies in the interval $(0,M_{\mathcal{A}})$, and $\sum_{i=1}^{M}[\tilde{\mathbf{u}}]_{i} = M_{\mathcal{A}}$. The forward, back-propagation and the structures of SELNet are illustrated in Fig. \ref{fig:SELNetCENet}. It is worth pointing out that the operator in \eqref{eq:apprxtopk} is differentiable. Moreover, $\tilde{\mathbf{u}}$  equals to the vector $\mathbf{a}$ with some constraints as stated in the following Theorem:
\begin{theorem}\label{theorem:K-hot}
If a vector $\mathbf{x}=\{[x_{1}, x_{2}, \cdots, x_{M}]|\forall m, x_{m}\in\mathbb{R}, x_{m} \geq 0 \}$ satisfies the following constraints:
\begin{IEEEeqnarray}{c}
\| \mathbf{x} \|^{r}_{r} = x^{r}_{1} + x^{r}_{2} + \cdots + x^{r}_{M} = K, \\
\| \mathbf{x} \|^{p}_{p} = x^{p}_{1} + x^{p}_{2} + \cdots + x^{p}_{M} = K, \\
\| \mathbf{x} \|^{q}_{q} = x^{q}_{1} + x^{q}_{2} + \cdots + x^{q}_{M} = K, \\
K \leq M, \quad 0<r<p<q<\infty, 
\end{IEEEeqnarray}
then the vector $\mathbf{x}$ is $K$-hot encoding vector, i.e., $K$  elements of $\mathbf{x}$ are `1' and $(M-K)$ elements of $\mathbf{x}$ are `0'.
\end{theorem}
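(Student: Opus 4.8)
The plan is to read the three hypotheses as saying that the function $t\mapsto\sum_{m}x_m^{t}$ takes the same value $K$ at the three points $r<p<q$, and to use the fact that this function is log-convex on $(0,\infty)$ — equivalently, to apply H\"older's inequality once at the interpolation point $p$ — so that equality forces a very rigid structure. First I would dispose of the degenerate case $K=0$, where $\sum_m x_m^{r}=0$ with $x_m\ge 0$ immediately gives $\mathbf{x}=\mathbf{0}$ (the $0$-hot vector), and otherwise assume $K\ge 1$. Then I would restrict attention to the support $\mathcal{S}=\{m:x_m>0\}$, observing that coordinates with $x_m=0$ contribute nothing to any of the power sums since $r,p,q>0$; so it suffices to show that every $x_m$ with $m\in\mathcal{S}$ equals $1$ (whence $|\mathcal{S}|=\|\mathbf{x}\|_r^r=K$).

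Next I would write $p=(1-\lambda)r+\lambda q$ with $\lambda=\frac{p-r}{q-r}\in(0,1)$ and apply H\"older's inequality to the sequences $(x_m^{(1-\lambda)r})_{m\in\mathcal{S}}$ and $(x_m^{\lambda q})_{m\in\mathcal{S}}$ with conjugate exponents $\frac{1}{1-\lambda}$ and $\frac{1}{\lambda}$:
\[
K=\sum_{m\in\mathcal{S}}x_m^{p}\le\Big(\sum_{m\in\mathcal{S}}x_m^{r}\Big)^{1-\lambda}\Big(\sum_{m\in\mathcal{S}}x_m^{q}\Big)^{\lambda}=K^{1-\lambda}K^{\lambda}=K.
\]
Hence the inequality is an equality. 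The equality condition for H\"older forces $(x_m^{r})_{m\in\mathcal{S}}$ and $(x_m^{q})_{m\in\mathcal{S}}$ to be proportional; since all these entries are strictly positive and $q\ne r$, this means $x_m^{q-r}$ is constant over $\mathcal{S}$, so all $x_m$ with $m\in\mathcal{S}$ equal a common value $c>0$.

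It then remains to pin down $c$: substituting $x_m\equiv c$ into two of the original constraints, say $|\mathcal{S}|\,c^{r}=K$ and $|\mathcal{S}|\,c^{q}=K$, and dividing gives $c^{q-r}=1$, i.e. $c=1$ (using $q\ne r$, $c>0$); then $|\mathcal{S}|=K$. Thus exactly $K$ coordinates of $\mathbf{x}$ equal $1$ and the remaining $M-K\ge 0$ equal $0$, which is precisely the $K$-hot property — and this also shows the hypotheses silently force $K$ to be a nonnegative integer with $K\le M$.

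The only real obstacle is the equality/boundary bookkeeping: one must justify discarding the zero coordinates throughout, extract from the H\"older equality case an honest common positive value $c$ (rather than merely "proportional up to zeros"), and notice that a \emph{single} power constraint cannot determine $c$, so a second constraint with a different exponent is needed to get $c=1$. I would also point out why all three equalities are genuinely used and two would not suffice: for instance $\mathbf{x}=(\tfrac13,\tfrac13,\tfrac43)$ satisfies $\|\mathbf{x}\|_1^1=\|\mathbf{x}\|_2^2=2$ yet is not $2$-hot, so the third equality is exactly what rules such vectors out. As an alternative to invoking H\"older, the same argument can be run with the second-derivative computation $(\log F)''\ge0$ for $F(t)=\sum_{m\in\mathcal{S}}x_m^t$, the strict inequality unless all $x_m$ on $\mathcal{S}$ coincide, but the H\"older route avoids differentiability bookkeeping and is the cleaner presentation.
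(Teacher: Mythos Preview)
Your proposal is correct and is essentially the same argument as the paper's: both restrict to the nonzero coordinates, apply H\"older with conjugate exponents $\tfrac{q-r}{q-p}$ and $\tfrac{q-r}{p-r}$ (your $\tfrac{1}{1-\lambda}$ and $\tfrac{1}{\lambda}$) to interpolate the $p$-sum between the $r$- and $q$-sums, use the equality case to force all nonzero entries to coincide, and then use two of the constraints to pin the common value at $1$ and the support size at $K$. Your write-up is a bit more careful about the degenerate case $K=0$, the bookkeeping at zero coordinates, and even supplies a counterexample showing two constraints would not suffice, but the core idea is identical.
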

\begin{proof}
Without loss of generality, we assume the elements of the vector $\mathbf{x}$ are sorted in descending order as $\mathbf{x}=[x_{1},x_{2},\cdots,x_{L},0,\cdots,0]$, where $x_{1}>x_{2}>\cdots,x_{L}>0$ and $L$ denotes the number of non-zero elements in $\mathbf{x}$. Then, the constraints of the vector $\mathbf{x}$  can be rewritten as 
\begin{IEEEeqnarray}{c}
\| \mathbf{x} \|^{r}_{r} = x^{r}_{1} + x^{r}_{2} + \cdots + x^{r}_{L} = K, \label{eq:constrtLr}\\
\| \mathbf{x} \|^{p}_{p} = x^{p}_{1} + x^{p}_{2} + \cdots + x^{p}_{L} = K, \label{eq:constrtLp}\\
\| \mathbf{x} \|^{q}_{q} = x^{q}_{1} + x^{q}_{2} + \cdots + x^{q}_{L} = K, \label{eq:constrtLq}\\
K \leq M, \quad 0<r<p<q<\infty.
\end{IEEEeqnarray}
Additionally, we set $c= \frac{q-r}{q-p} $ and $d=\frac{q-r}{p-r}$, which yields $c,d\in(1,\infty)$ and 
\begin{equation}\label{eq:designedfactor}
\frac{1}{c} + \frac{1}{d} = 1, \quad \frac{r}{c} + \frac{q}{d} = p.
\end{equation}

Now we can apply H\"{o}lder's inequality for sums, which states that
\begin{equation}\label{eq:holderineq}
\sum\limits_{i=1}^{L}|y_{i}z_{i}| \leq \left( \sum\limits_{i=1}^{L}|y_{i}|^{c} \right)^{\frac{1}{c}} \left( \sum\limits_{i=1}^{L}|z_{i}|^{d} \right)^{\frac{1}{d}},
\end{equation}
where $c,d\in(1,\infty)$ and $\frac{1}{c} + \frac{1}{d} = 1$.  The equality in \eqref{eq:holderineq} holds if and only if the vectors $[y_{1},\cdots,y_{L}]^{T}$ and $[z_{1},\cdots,z_{L}]^{T}$ are linearly dependent.

Hence, we choose $y_{i}=x^{r/c}_{i}$ and $z_{i}=x^{q/d}_{i}$, and substitute $y_{i}$ and $z_{i}$ into H\"{o}lder's inequality \eqref{eq:holderineq}. Then, we obtain
\begin{equation}\label{eq:holderx}
\sum\limits_{i=1}^{L} x^{\frac{r}{c}}_{i} x^{\frac{q}{d}}_{i} \leq \left( \sum\limits_{i=1}^{L}(x^{\frac{r}{c}}_{i})^{c} \right)^{\frac{1}{c}} \left( \sum\limits_{i=1}^{L}(x^{\frac{q}{d}}_{i})^{d} \right)^{\frac{1}{d}}.
\end{equation}
Substituting \eqref{eq:designedfactor} into \eqref{eq:holderx} and after some algebra, we obtain
\begin{equation}\label{eq:ineqtoeq}
\sum\limits_{i=1}^{L} x^{p}_{i}  \leq \left( \sum\limits_{i=1}^{L}x^{r}_{i} \right)^{\frac{1}{c}} \left( \sum\limits_{i=1}^{L}x^{q}_{i} \right)^{\frac{1}{d}}.
\end{equation}
Note that the inequality in \eqref{eq:ineqtoeq} becomes an equality with the constraints \eqref{eq:constrtLr}-\eqref{eq:constrtLq}. Therefore, the two vectors $[x^{r/c}_{1},\cdots,x^{r/c}_{L}]^{T}$ and $[x^{q/d}_{1},\cdots,x^{q/d}_{L}]^{T}$ must be linearly dependent. With the linear dependence condition as well as the constraints \eqref{eq:constrtLr}-\eqref{eq:constrtLq}, and after some algebra,  we can easily obtain 
\begin{IEEEeqnarray}{l}
L = K, \\
x_{1} = x_{2} = \cdots = x_{L} = 1,
\end{IEEEeqnarray}
which indicates that the vector $\mathbf{x}$ is $K$-hot encoding. 
This completes the proof.
\end{proof}

Note that the vector $\tilde{\mathbf{u}}$ in \eqref{eq:apprxtopk} satisfies $\ell_{1}$-norm constraint as $\| \tilde{\mathbf{u}} \|_{1} = M_{\mathcal{A}} $. According  to Theorem \ref{theorem:K-hot}, another two $\ell_{p}$-norm constraints are needed to ensure $\tilde{\mathbf{u}}$ being equal to the vector $\mathbf{a}$. Hence, for ease of computation and notation, we utilize $\ell_{2}$-norm and $\ell_{3}$-norm  with $\| \tilde{\mathbf{u}} \|^{2}_{2} = M_{\mathcal{A}} $ and $\| \tilde{\mathbf{u}} \|^{3}_{3} = M_{\mathcal{A}} $ as the two constraints,  which has also been adopted in \cite{ASNLinBo}. However, it is difficult to simultaneously  embed both $\ell_{2}$-norm and $\ell_{3}$-norm constraints for $\tilde{\mathbf{u}}$ in a neural network. Alternatively, we can embed the constraints into the loss function of the network so that the constraints are gradually satisfied during training process. Specifically, the loss function of SELNet is given by 
\begin{equation}\label{eq:SELNetLoss}
\mathcal{L}_{\rm SELNet}\left( \boldsymbol{\Theta}_{\rm SELNet} \right) = \gamma_{1}( \| \tilde{\mathbf{u}} \|^{2}_{2} - M_{\mathcal{A}} )^{2} + \gamma_{2}( \| \tilde{\mathbf{u}} \|^{3}_{3} - M_{\mathcal{A}} )^{2},
\end{equation} 
where $\boldsymbol{\Theta}_{\rm SELNet}$ contains all the trainable parameters of SELNet, and the hyperparameters $\gamma_{1}$ and $\gamma_{2}$ are to balance the two terms.  During training stage,  both $\| \tilde{\mathbf{u}} \|^{2}_{2} - M_{\mathcal{A}}$ and $\| \tilde{\mathbf{u}} \|^{3}_{3} - M_{\mathcal{A}}$ gradually reduce to zero when the loss $\mathcal{L}_{\rm SELNet}$ keeps decreasing. Therefore, the approximation between $\tilde{\mathbf{u}}$  and the vector $\mathbf{a}$ progressively becomes exact.

The forward and back-propagation of SELNet, the connections, and the cooperations between CENet and SELNet are summarized in Fig. \ref{fig:SELNetCENet}. The overall loss function is 
\begin{equation}\label{eq:MixedLoss}
\mathcal{L}_{\rm CENet} \left( \boldsymbol{\Theta}_{\rm CENet} \right) + \gamma_{3} \mathcal{L}_{\rm SELNet}\left( \boldsymbol{\Theta}_{\rm SELNet} \right),
\end{equation} 
where $\gamma_{3}$ is a hyperparameter. By jointly training CENet and SELNet with the loss \eqref{eq:MixedLoss}, both the nonlinear channel estimator $\psi(\cdot)$ and the selection vector will be obtained.
 
\section{Joint Pilot Design and Channel Estimation} \label{sec:Pilot}
Note that  both  Section \ref{sec:UL CE} and \ref{sec:Mixed} address the channel estimation problem at the receiver side, while the pilot design at the transmitter side has not been considered yet. By learning the correlations and statistics of the wireless channel, we can optimize the pilot signal to reduce the pilot overhead and improve the channel estimation accuracy significantly \cite{AhmedPilot}. 

The joint pilot design and channel estimation with mixed-ADCs allocation problem can be formulated as
\begin{equation}\label{eq:jointPilotCE}
\begin{array}{cl} 
{\min\limits_{ \psi(\cdot),  \widetilde{\mathbf{P}}, \mathcal{A}, \mathcal{B}}} & \mathbb{E}\left\{ \frac{1}{K}\sum_{k=1}^{K}\frac{\left\|\tilde{\mathbf{h}}_{k}-\hat{\mathbf{h}}_{k}\right\|^{2}_{2}}{\left\|\tilde{\mathbf{h}}_{k}\right\|^{2}_{2}} \right\} \\
{\text{ s.t.}} & [\hat{\mathbf{h}}_{1},\cdots,\hat{\mathbf{h}}_{K}] = \psi (\widetilde{\mathbf{P}}, \widetilde{\mathbf{Y}}_{\mathcal{A}}, \widetilde{\mathbf{Y}}_{\mathcal{B}}),\\
 & \eqref{eq:RxVectorReal}, \ \eqref{eq:aandb}\ {\rm and}\ \eqref{eq:YaandYb}, \\
 & |\mathcal{A}|=M_{\mathcal{A}},\ \mathcal{A} \cap \mathcal{B}=\emptyset, \\
 & \mathcal{A} \cup \mathcal{B}=\{1,2, \ldots, M\}, \\
 & \operatorname{tr}(\widetilde{\mathbf{P}}\widetilde{\mathbf{P}}^{T}) = N_{p}\rho, 
\end{array}
\end{equation}
where $\rho$ denotes the average transmission power,  and $\psi(\cdot)$ is the channel estimator.
Similar to the optimization problem in  \eqref{eq:mixedADCopt}, it is hard to solve \eqref{eq:jointPilotCE} by traditional signal processing methods due to the non-convexity and the combinatorial optimization problem. Furthermore, it is worth noting that a new solution of \eqref{eq:jointPilotCE} needs to be obtained whenever the channel $\widetilde{\mathbf{H}}$ changes, which hampers the effectiveness of the traditional solver. To tackle this issue, we utilize DL-based method to learn the features  over a channel dataset and obtain the corresponding jointly optimized pilot and channel estimator. Specifically, we devise a subnet named by pilot design network (PDNet) to cooperate with the aforementioned CENet and SELNet, and then develop a autoencoder-inspired end-to-end architecture to jointly optimize the three subnets, as depicted in Fig. \ref{fig:JointNet}

\begin{figure}[!tbp]
\centering
\includegraphics[width=0.46\textwidth]{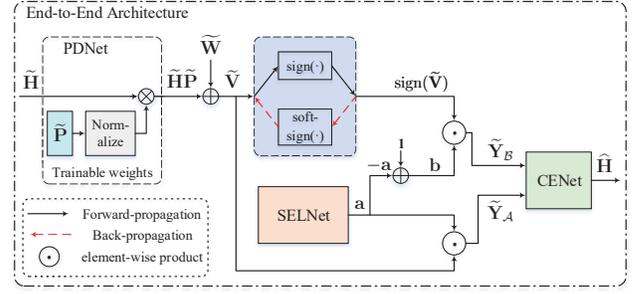}
\caption{The autoencoder-inspired end-to-end architecture for joint optimization.}
\label{fig:JointNet}
\end{figure}

From \eqref{eq:RxVectorReal}, we propose a neural network called PDNet to realize the the matrix multiplication between $\widetilde{\mathbf{H}}$ and $\widetilde{\mathbf{P}}$.
In particular, if the input of the network is $\widetilde{\mathbf{H}}$, and the trainable weights of the network are $\widetilde{\mathbf{P}}$, and then  the network outputs $\widetilde{\mathbf{H}}\widetilde{\mathbf{P}}$. Furthermore, to embed the transmission power constraint $\operatorname{tr}(\widetilde{\mathbf{P}}\widetilde{\mathbf{P}}^{T}) = N_{p}\rho$ into the network, a normalization layer is applied to $\widetilde{\mathbf{P}}$ prior to the matrix multiplication operation, which is given by  
\begin{equation}\label{eq:PowerNormalized}
\sqrt{N_{p}\rho}\frac{\widetilde{\mathbf{P}}}{\| \widetilde{\mathbf{P}} \|_{F}}.
\end{equation}
The structure of PDNet is illustrated in Fig. \ref{fig:JointNet}. Note that the trainable parameters of PDNet are optimized and obtained by training the networks over dataset, and namely the optimized pilot matrix $\widetilde{\mathbf{P}}$ is directly obtained from the thoroughly trained weights of PDNet. Therefore, the key of the joint pilot design and channel estimation with mixed-ADCs allocation problem is  to  integrate PDNet, CENet and SELNet into a whole architecture for the sake of joint optimization.

\subsection{Autoencoder Inspired End-to-End Architecture}\label{sec:end-to-end}
Note that the pilot signal is transmitted by the users, while the channel  will have to be estimated  at the BS. This indicates that the joint optimization of PDNet, CENet and SELNet should be implemented in an end-to-end manner, where the transmitter, the receiver, the process of transmission and the quantization should be involved and emulated. The autoencoder \cite{DLbook} is an end-to-end optimized neural network where the input is copied to the output, and some hidden layers called encoder and decoder are used to learn the latent representations. Hence, we will develop an end-to-end network structure inspired by the  architecture of autoencoder.  The analogy  is given as follows:
\begin{enumerate*} \item the channel matrix is the input; \item the proposed PDNet denotes the encoder to learn the latent features of the channel matrix; \item a noise layer and a quantization layer are appended to mimic the noise and quantizer of the receiver; \item the proposed SELNet and CENet are  regarded collectively as the decoder to process the quantized received signal; \item the output is the estimated channel matrix.  \end{enumerate*}
The structure of the end-to-end network  is depicted in Fig. \ref{fig:JointNet}. 

We can then obtain the jointly optimized pilot, channel estimator and selection vectors by training the end-to-end network with the following aggregate loss function 
\begin{multline}\label{eq:overallloss}
\mathcal{L} (\boldsymbol{\Theta}_{\rm PDNet},\boldsymbol{\Theta}_{\rm CENet},\boldsymbol{\Theta}_{\rm SELNet}) = \mathcal{L}_{\rm CENet} \left( \boldsymbol{\Theta}_{\rm CENet} \right) \\ + \gamma_{3} \mathcal{L}_{\rm SELNet}\left( \boldsymbol{\Theta}_{\rm SELNet} \right),
\end{multline}
where $\boldsymbol{\Theta}_{\rm PDNet}$ denotes the trainable weights of PDNet. However, the quantization layer hinders the back-propagation of the end-to-end network, because the quantization function $\operatorname{sign}(\cdot)$ is not differentiable at the origin and has zero derivative everywhere else. Consequently, the proposed PDNet that is prior to the quantization layer cannot be trained by the back-propagation algorithm.

To address this issue, we can use a differentiable function to approximate the $\operatorname{sign}(\cdot)$ function. In \cite{YW1b}, the \emph{sigmoid} function with adjustable slope has been utilized to replace $\operatorname{sign}(\cdot)$, where the steepness of the sigmoid function is slowly increased during training progress. Specifically, the sigmoid function with adjustable slope is denoted by 
\begin{equation}\label{eq:adjustableSigmoid}
\operatorname{softsign}(x) = 2\operatorname{sigmoid}(\kappa x) - 1 = \frac{2}{ 1 + \exp(-\kappa x) } - 1,
\end{equation}
where $\kappa$ is the factor to adjust the steepness.  The  input-output relationships of the sigmoid function with different value of $\kappa$ are illustrated in Fig. \ref{fig:AdjustableSigmoid}, and it can be found that increasing  $\kappa$ will make the sigmoid function steeper. Here we propose two choices to utilize the adjustable sigmoid function: 
\begin{enumerate*}
\item replacing $\operatorname{sign}(\cdot)$ with the adjustable sigmoid for both forward and back-propagation; \item using $\operatorname{sign}(\cdot)$ for forward-propagation and the adjustable sigmoid for back-propagation.
\end{enumerate*}
Now the non-differentiation problem is addressed by using the adjustable sigmoid function. Next, we focus on the training and implementation strategies for the whole end-to-end network.
\begin{figure}[!tbp]
\centering
\includegraphics[width=0.38\textwidth]{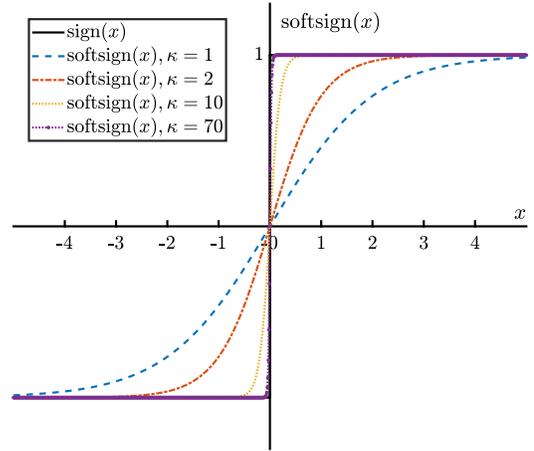}
\caption{The The  input-output relationships of softsign function with different values of $\kappa$.}
\label{fig:AdjustableSigmoid}
\end{figure}

\subsection{Off-line Training and On-line Implementation}

\begin{algorithm}[!tb]
\caption{Training and deployment of the end-to-end network}
\label{alg:Training}
\begin{algorithmic}[1]
\STATE {\bf initialize:} Training dataset $\mathcal{D}$ of $\widetilde{\mathbf{H}}$, the number of iterations $N_{\rm iter}$, hyperparameters $C_{1,\mathcal{{A}}}$-$C_{3,\mathcal{{A}}}$, $C_{1,\mathcal{{B}}}$-$C_{3,\mathcal{{B}}}$ and $\kappa$,  and trainable parameters $\boldsymbol{\Theta}_{\rm PDNet}$, $\boldsymbol{\Theta}_{\rm CENet}$, $\boldsymbol{\Theta}_{\rm SELNet}$.
\STATE {\bf On-line Training:}
\FOR{$i=1:N_{\rm iter}$}
\item[] \COMMENT{forward-propagation}
\STATE Draw a random subset of $\mathcal{D}$ to generate a mini-batch 
\STATE Feed $\widetilde{\mathbf{H}}$ into PDNet to obtain: $\widetilde{\mathbf{H}}\widetilde{\mathbf{P}}$
\STATE The noise layer outputs: $\widetilde{\mathbf{Z}} = \widetilde{\mathbf{H}}\widetilde{\mathbf{P}} + \widetilde{\mathbf{W}}$
\STATE The quantization layer outputs: $\operatorname{sign}(\widetilde{\mathbf{Z}})$
\STATE SELNet generates: $\mathbf{a}$, $\mathbf{b}=\mathbf{1} - \mathbf{a}$ and $\tilde{\mathbf{u}}$
\STATE Compute the quantized signals: $\widetilde{\mathbf{Y}}_{\mathcal{A}} = \begin{bmatrix} \mathbf{a} &  \cdots & \mathbf{a} \\ \mathbf{a} &  \cdots & \mathbf{a} \end{bmatrix} \odot \widetilde{\mathbf{Z}}, \ \widetilde{\mathbf{Y}}_{\mathcal{B}} = \begin{bmatrix} \mathbf{b} &  \cdots & \mathbf{b} \\ \mathbf{b} &  \cdots & \mathbf{b} \end{bmatrix} \odot \operatorname{sign}(\widetilde{\mathbf{Z}}) $
\STATE Input $\widetilde{\mathbf{Y}}_{\mathcal{A}}$ and $\widetilde{\mathbf{Y}}_{\mathcal{B}}$ into CENet and estimate:  $\widehat{\mathbf{H}}$
\STATE Compute the loss: $\mathcal{L} (\boldsymbol{\Theta}_{\rm PDNet},\boldsymbol{\Theta}_{\rm CENet},\boldsymbol{\Theta}_{\rm SELNet})$
\item[] \COMMENT{back-propagation}
\STATE Replace $\operatorname{sign}(\cdot)$ with \eqref{eq:adjustableSigmoid} for the quantization layer
\STATE Replace the vector $\mathbf{a}$ with \eqref{eq:apprxtopk} for SELNet
\STATE Apply back-propagation algorithm and update $\boldsymbol{\Theta}_{\rm PDNet}$, $\boldsymbol{\Theta}_{\rm CENet}$ and $\boldsymbol{\Theta}_{\rm SELNet}$ by Adam optimizer
\ENDFOR
\STATE Acquire the optimized pilot from the weights of PDNet 
\STATE Obtain the optimized vectors $\mathbf{a}$ and $\mathbf{b}$ from SELNet 
\STATE Acquire the optimized CENet
\STATE {\bf Off-line Deployment:}
\STATE Assign the pilot to the users for the uplink transmission
\STATE Allocate the mixed-ADCs at the BS based on $\mathbf{a}$ and $\mathbf{b}$
\STATE Deploy CENet at the BS to estimate $\widehat{\mathbf{H}}$ from $\widetilde{\mathbf{Y}}_{\mathcal{A}}$ and $\widetilde{\mathbf{Y}}_{\mathcal{B}}$
\ENSURE  $\widehat{\mathbf{H}}$
\end{algorithmic} 
\end{algorithm}
The whole end-to-end network architecture for joint pilot design and channel estimation with mixed-ADCs allocation is demonstrated in Fig. \ref{fig:JointNet}. The training of the whole end-to-end network can be performed in an off-line manner (e.g.,  cloud computing and remote server) with sufficient dataset of the channel matrix. When the whole network is trained and  optimized, the network is then properly split and assigned to the transmitters and receivers, respectively. The details are given as follows:

During off-line training stage, the data-flow diagram for the forward-propagation is illustrated in Fig. \ref{fig:JointNet}. Notice that in the back-propagation, the vector $\mathbf{a}$ is replaced by $\widetilde{\mathbf{u}}$ for the proposed SELNet, and $\operatorname{sign}(\cdot)$ is replaced by the adjustable sigmoid for the quantization layer.
Moreover, the adaptive moment estimation (Adam) optimizer \cite{adam} is utilized in the back-propagation to update the trainable parameters $\boldsymbol{\Theta}_{\rm PDNet}$, $\boldsymbol{\Theta}_{\rm CENet}$ and $\boldsymbol{\Theta}_{\rm SELNet}$ by minimizing the loss functions in \eqref{eq:overallloss}. 

After the whole end-to-end network is trained, it can be implemented in practical systems. Specifically, we obtain the optimized pilot directly from the weights of PDNet and assign the pilot to the users. Then, the BS allocate the mixed-ADCs according to the selection vectors $\mathbf{a}$ and $\mathbf{b}$ that are optimized by SELNet. Furthermore, the optimized CENet is deployed at the BS to estimate the channel matrix from the quantized received signal. Additionally, the network will be retrained and re-implemented until the statistics of the wireless channel has  changed significantly.

The detailed steps of training and implementation are summarized in Algorithm \ref{alg:Training}.

\section{Numerical Results} \label{sec:sim}
In this section, we present simulation results to show the effectiveness of the proposed joint pilot design and channel estimation method, and make comparisons with the state-of-the-art algorithms. 
\subsection{Dataset generation}
We consider that the BS has a ULA with $M=64$ antennas serving $K=8$ users, where each user has $L_{k} = 3$  channel paths.  
The initial values of all the $24$ DOAs are randomly generated within the interval $[-80^{\circ},80^{\circ}]$. The initial values of the channel complex gains are randomly generated with independent and identically distributed as $\mathcal{CN}(0,1)$. 
To generate the dataset, small perturbations are added to the initial values of the DOAs and channel gains. Specifically, perturbations randomly generated within the interval $[-4^{\circ},4^{\circ}]$ are added to the initial DOAs, and perturbations randomly generated with $\mathcal{CN}(0,0.04)$ are added to the initial channel gains. The generated datasets of $\widetilde{\mathbf{H}}$ are randomly divided into training and testing sets with $100,000$ and $5,000$ samples, respectively.
\subsection{Network Hyperparameters Configuration}
Adam optimizer is  adopt for the end-to-end network training with batch size $100$. The initial learning rate is set as $2 \times 10^{-3}$ and exponential decay every $20$ epochs. The decay factor and the maximum number of epochs are set as $0.7$ and $200$, respectively. 

The output features of the densely connected RK3 model-driven blocks in CENet are denoted by $C_{1,\mathcal{{A}}}$-$C_{3,\mathcal{{A}}}$ and $C_{1,\mathcal{{B}}}$-$C_{3,\mathcal{{B}}}$. The output features are set as $C_{1,\mathcal{{A}}} = 60$, $C_{2,\mathcal{{A}}} = 120$, $C_{3,\mathcal{{A}}} = 240$, $C_{1,\mathcal{{B}}} = 20$, $C_{2,\mathcal{{B}}} = 40$ and $C_{3,\mathcal{{B}}} = 80$. 

The layer size of the 3-layers FNN in SELNet is set as $16$, $32$ and $64$, respectively, and  LeakyReLU activation function is used in the the 3-layers FNN.
The hyperparameters of the loss function of SELNet in \eqref{eq:SELNetLoss} are set as $\gamma_{1}=\gamma_{2}=1$. The penalty factor $\gamma_{3}$ in \eqref{eq:MixedLoss} is initially set as $\gamma_{3}=0.01$ and $0.02$ increment every epoch with maximum value $0.5$. 

The pilot length $N_{p}$ is a hyperparameter of the trainable weights $\widetilde{\mathbf{P}}\in {\mathbb C}^{2K\times N_{p}}$ in PDNet, and we will investigate the performances of the proposed method under different values of $N_{p}$. The steepness factor $\kappa$ in  \eqref{eq:adjustableSigmoid} is set as $\kappa = 70$ for better approximation of $\operatorname{sign}(\cdot)$. 

\subsection{Evaluation Metrics}
The performance metrics of channel estimation are defined as the normalized mean square error (NMSE)
\begin{IEEEeqnarray}{l}
\mathrm{NMSE} \triangleq \mathbb{E}\left\{ \frac{1}{K} \sum_{k=1}^{K} \frac{\left\|\tilde{\mathbf{h}}_{k}-\hat{\mathbf{h}}_{k}\right\|^{2}_{2}}{\left\|\tilde{\mathbf{h}}_{k}\right\|^{2}_{2}} \right\}, \label{eq:NMSEdef}
\end{IEEEeqnarray}
where $\tilde{\mathbf{h}}_{k}$ and $\hat{\mathbf{h}}_{k}$ are the $k$th column vectors of $\widetilde{\mathbf{H}}$ and $\widehat{\mathbf{H}}$ respectively. 
The channel is normalized, and SNR is defined as 
\begin{equation}
\mathrm{SNR} \triangleq 10\log_{10}\left( \frac{\| \widetilde{\mathbf{P}} \|^{2}_{F}}{N_p \sigma^2} \right)\quad \mathrm{dB},
\end{equation}
where the noise variance $\widetilde{\mathbf{W}}$ is set as $\sigma^2 = 1$. According to \eqref{eq:PowerNormalized}, SNR equals to the ratio of the average transmission power $\rho$ to $\sigma^2$ ($\mathrm{SNR} = 10\log_{10}\left( \frac{\rho}{\sigma^2} \right)$) when the channel is normalized.

The performances of the following channel estimation algorithms will be taken as comparisons:
\begin{itemize}
\item The compressed sensing based gridless generalized approximate message passing (GL-GAMP) algorithm  that exhibits superior performance over the state-of-the-art channel estimation algorithms \cite{XLYonebit}. The pilot sequence is chosen as a length-$N_{p}$ Zadoff-Chu (ZC) sequence, and each row of  the pilot matrix $\mathbf{P}$ is a circularly shifted version of the ZC sequence, which assures the orthogonality of the pilot sequences for different users. Moreover, all ADCs are one-bit precision.
\item "CENet, 1-bit": the pilot sequence is ZC sequence, and all ADCs are one-bit precision. We only use CENet at the BS to estimate the channel matrix.
\item "PDNet+CENet, 1-bit": all ADCs are one-bit precision. We use PDNet and CENet with the end-to-end architecture for joint pilot design and channel estimation.
\item "PDNet+CENet, fixed $M_{\mathcal{A}}=\#$": there are $M_{\mathcal{A}}$ full-resolution ADCs. We omit SELNet and use a fixed allocation strategy for the mixed-precision ADCs, i.e., the set $\mathcal{A}$ is fixed and unoptimized.
\item We replace the densely connected RK3 model-driven blocks in CENet with regular CNNs, which is a modification of the algorithm in \cite{DNNCEPD}. We add the term "CNN" to denote this case.
\end{itemize}
The abbreviations and configurations of the comparison algorithms are summarized in Table \ref{table:benchmarks}.
\begin{table*}[!tbh]
\centering
\caption{Abbreviations and configurations of the comparison algorithms}
\label{table:benchmarks}
\begin{tabular}{cccccc}
\toprule
algorithm & pilot & one-bit & full-resolution & ADCs & channel \\
abbreviation & design & ADCs & ADCs & allocation & estimation \\
\midrule
GAMP & no, ZC  & $M$  & $0$ & - & GAMP\cite{XLYonebit} \\
\cmidrule{1-6}
GL-GAMP & no, ZC  & $M$  & $0$ & - & GL-GAMP\cite{XLYonebit} \\
\cmidrule{1-6}
CENet, 1-bit & no, ZC  & $M$  & $0$ & - & CENet \\
\cmidrule{1-6}
PDNet+CENet, 1-bit & yes, PDNet  & $M$  & $0$ & - & CENet \\
\cmidrule{1-6}
PDNet+CENet, fixed $M_{\mathcal{A}}=\#$ & yes, PDNet  & $M-\#$  & $\#$ & fixed & CENet \\
\cmidrule{1-6}
CNN & -  & - & - & - & regular CNNs \\
\cmidrule{1-6}
proposed, $M_{\mathcal{A}}=\#$ & yes, PDNet  & $M-\#$ & $\#$ & SELNet & CENet \\
\bottomrule
\end{tabular}
\end{table*}
\subsection{Performance Analysis}
In Fig.~\ref{fig:NMSE16}, the NMSE of  channel estimate is plotted over SNR, where $N_{p} = 64$. 
Note that GAMP only considers unstructured channel model without DOA estimation, while GL-GAMP \cite{XLYonebit} estimates DOAs as well as channel gains and then reconstruct the angular domain channel. Hence, GAMP has poor performance than GL-GAMP.  
Compared with GL-GAMP, "CENet, 1-bit" algorithm has the same configurations about pilot and quantization resolution but utilize a DL-based channel estimation method. We observe that "CENet, 1-bit" outperforms GL-GAMP with more than $3$ dB performance gap in the high SNR regime, which shows the superiority of the proposed CENet over the traditional channel estimator with one-bit quantization.
Moreover, it can be seen that there is considerable performance gap between "CENet, 1-bit" and "PDNet+CENet, 1-bit", which shows the effectiveness of the proposed pilot design approach. 
This indicates that the proposed PDNet can optimize the pilot to capture the intrinsic characteristics of the sparse mmWave channels,  and then significantly enhance the channel estimation performance by collaborated with CENet.
Furthermore,  the results also show that "PDNet+CENet, fixed $M_{\mathcal{A}}=16$" has notable performance improvement over "PDNet+CENet, 1-bit", since $16$ full-resolution ADCs are utilized.
Hence, one effective way to increase the estimation accuracy is to increase the number of high-resolution ADCs with mixed-precision architecture.
Here, the $16$ full-resolution ADCs are equispaced within the $M=64$ antennas for the "PDNet+CENet, fixed $M_{\mathcal{A}}=16$" case, whereas the equispaced deployment is not necessarily optimal. 
As a comparison, we illustrate the performance of the proposed  end-to-end architecture with SELNet that will optimize the selection vectors.  It is seen that the proposed algorithm with $M_{\mathcal{A}}=16$  surpasses  "PDNet+CENet, fixed $M_{\mathcal{A}}=16$" in high SNR regime, which indicates that different mixed-ADCs allocation will affect the performance notably.  This results show the effectiveness of the proposed  architecture with SELNet to optimize the allocation of mixed-ADCs.
\begin{figure}[!tbp]
\centering
\includegraphics[width=0.4\textwidth]{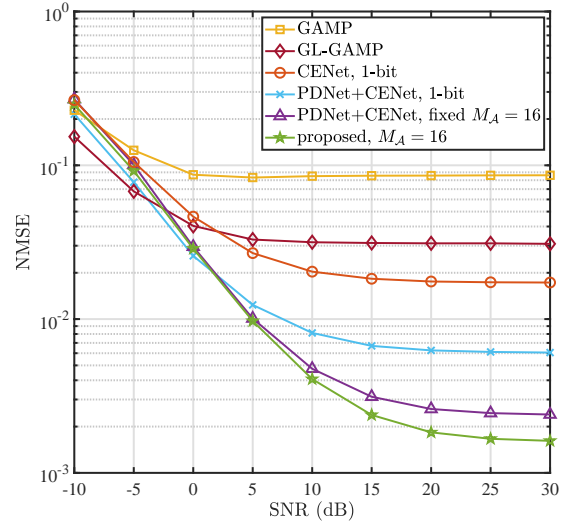}
\caption{$\mathrm{NMSE}$ versus SNR; $M=64$ and $N_{p}=64$. Performance comparison of the proposed method and benchmark algorithms.}
\label{fig:NMSE16}
\end{figure}

\begin{figure}[!tbp]
\centering
\includegraphics[width=0.4\textwidth]{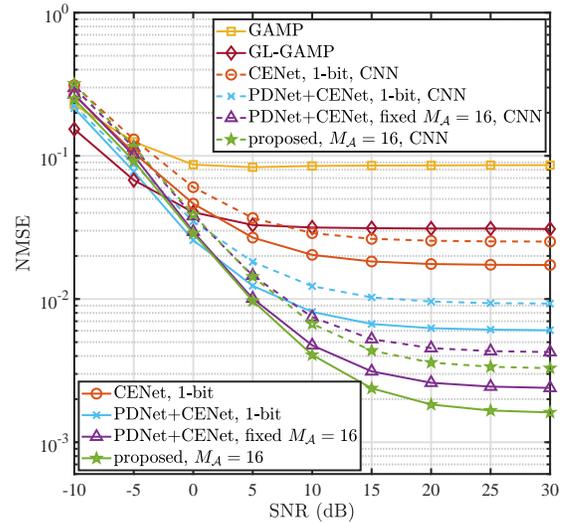}
\caption{$\mathrm{NMSE}$ versus SNR; $M=64$ and $N_{p}=64$. Solid line and dashed line denote the performances of  the densely connected RK3 model-driven blocks and regular CNNs, respectively.}
\label{fig:NMSECNN}
\end{figure}

Fig.~\ref{fig:NMSECNN} shows the NMSE of  channel estimation  algorithms versus SNR, where $M=64$, $N_{p}=64$ and the number of full-resolution ADCs is $M_{\mathcal{A}}=16$ for mixed-ADCs cases. In Fig. \ref{fig:NMSECNN}, the dashed lines denote the results by replacing  the densely connected RK3 model-driven blocks in CENet with regular CNNs. Besides the similar phenomena shown in Fig.~\ref{fig:NMSE16}, we observe that there is a notable gap between the solid line and dashed line for each method. In particular, compared to regular CNNs, there is a significant performance improvement by using the RK3 model-driven blocks with dense connections in CENet. These results show the effectiveness and superior of the proposed  CENet over the regular neural network structures.

\begin{figure}[!tbp]
\centering
\includegraphics[width=0.4\textwidth]{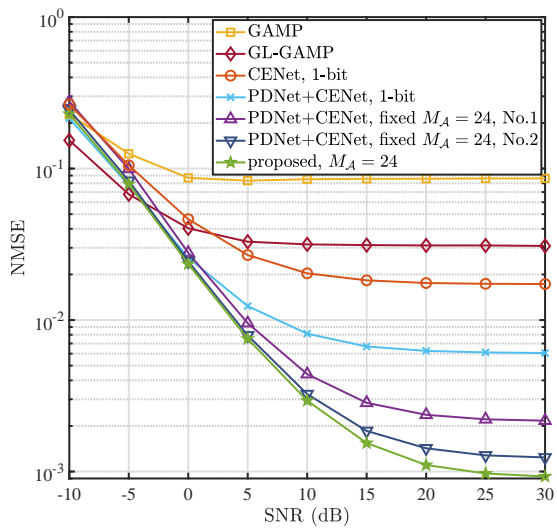}
\caption{$\mathrm{NMSE}$ versus SNR; $M=64$ and $N_{p}=64$. We set $M_{\mathcal{A}}=24$ for the mixed-ADCs cases.}
\label{fig:NMSE24}
\end{figure}

Fig.~\ref{fig:NMSE24} shows the NMSE of  channel estimation  algorithms versus SNR, where $M=64$ and $N_{p}=64$. 
Similar phenomena of Fig. \ref{fig:NMSE16} can also be seen in Fig.~\ref{fig:NMSE24}, where "CENet, 1-bit" is much better than traditional GL-GAMP algorithm and "PDNet+CENet, 1-bit" has significant performance improvement over "CENet, 1-bit".
Additionally, in Fig.~\ref{fig:NMSE24}, we set $M_{\mathcal{A}}=24$ with $24$ full-resolution and $40$ one-bit ADCs utilized. Unlike the configuration of Fig.~\ref{fig:NMSE16}, since $24$ full-resolution ADCs can not be equispaced within $M=64$ antennas, we consider two configurations for mixed-ADCs allocation denoted by "fixed $M_{\mathcal{A}}=24$, No.1" and "fixed $M_{\mathcal{A}}=24$, No.2", respectively. As can be seen, there is a notable performance gap between "PDNet+CENet, fixed $M_{\mathcal{A}}=24$, No.1" and "PDNet+CENet, fixed $M_{\mathcal{A}}=24$, No.2", which indicates that different allocations of full-resolution ADCs affect channel estimation accuracy significantly. Hence,  to illustrate the performance gains by optimizing the allocation of mixed-ADCs, the performance of the proposed  end-to-end architecture with $M_{\mathcal{A}}=24$ is also shown in Fig.~\ref{fig:NMSE24}. We see that the proposed end-to-end architecture outperforms "PDNet+CENet, fixed $M_{\mathcal{A}}=24$, No.1" and "PDNet+CENet, fixed $M_{\mathcal{A}}=24$, No.2" remarkably, which shows the effectiveness of the proposed end-to-end architecture with SELNet to optimize the allocation of mixed-ADCs and improve channel estimation performance.
\begin{figure}[!tbp]
\centering
\includegraphics[width=0.4\textwidth]{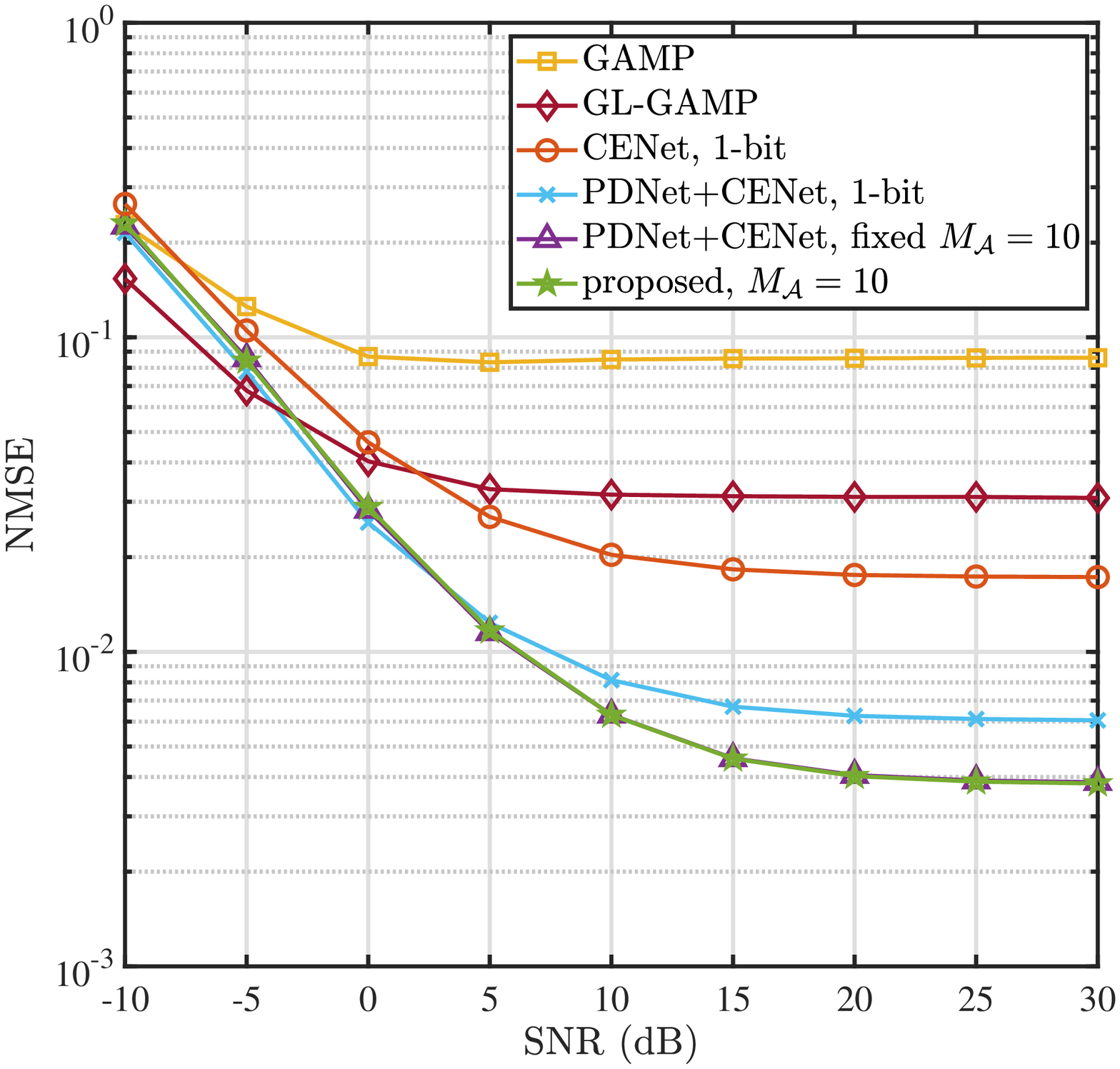}
\caption{$\mathrm{NMSE}$ versus SNR; $M=64$ and $N_{p}=64$. Performance comparison of the proposed method and benchmark algorithms, where $M_{\mathcal{A}}=10$.}
\label{fig:NMSE10}
\end{figure}

Fig.~\ref{fig:NMSE10} displays the NMSE of  channel estimation  algorithms over SNR, where $M=64$ and $N_{p}=64$. 
Similar phenomena for GL-GAMP, "CENet, 1-bit" and "PDNet+CENet, 1-bit" can be seen in Fig.~\ref{fig:NMSE10}.
Moreover, the results for the mixed-ADCs cases are also illustrated in Fig.~\ref{fig:NMSE10}, where $10$ full-resolution and $56$ one-bit ADCs utilized. 
Specifically, for "PDNet+CENet, fixed $M_{\mathcal{A}}=10$", the index set of RF chains in which the full-resolution ADCs are deployed is $\{1,9,17,25,33,41,49,57,13,53\}$. While, for the proposed end-to-end architecture, the allocation for mixed-ADCs is optimized by SELNet. Unlike the results of Fig.~\ref{fig:NMSE16} and Fig.~\ref{fig:NMSE24}, the  curve of the proposed end-to-end method almost overlaps that of "PDNet+CENet, fixed $M_{\mathcal{A}}=10$" as shown in Fig.~\ref{fig:NMSE10}. This indicates that the performance gains by optimizing the mixed-ADCs allocation become smaller when the number of full-resolution ADCs decreases.
\begin{figure}[!tbp]
\centering
\includegraphics[width=0.4\textwidth]{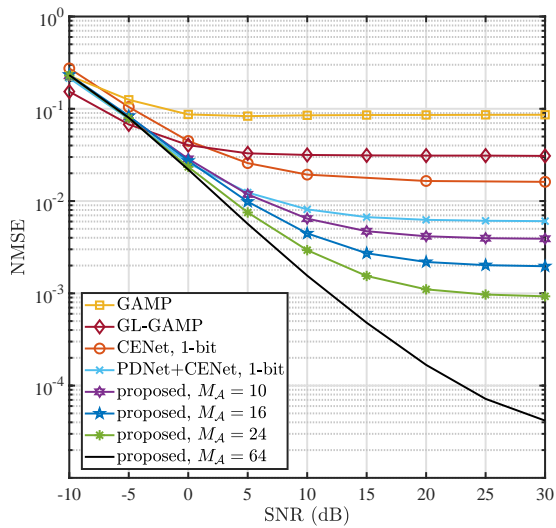}
\caption{$\mathrm{NMSE}$ versus SNR for different values of $M_{\mathcal{A}}$, where $M=64$ and $N_{p}=64$.}
\label{fig:NMSEfull}
\end{figure}

Fig.~\ref{fig:NMSEfull} plots the NMSE of  channel estimation  algorithms over SNR by varying the value of $M_{\mathcal{A}}$, where $M=64$ and $N_{p}=64$.
As can been seen in Fig.~\ref{fig:NMSEfull}, the channel estimation error of the proposed end-to-end architecture decreases when the number of full-resolution ADCs, $M_{\mathcal{A}}$, increases. The reason is straightforward that less quantization errors will be introduced when more full-resolution ADCs are utilized. Meanwhile,  there are nonzero error floors for all the methods with mixed-precision ADCs ($M_{\mathcal{A}}<64$), especially in the high SNR regime. Nevertheless, we observe that the nonzero error floor of the proposed end-to-end method is eliminated when all RF chains are equipped with full-resolution ADCS ($M_{\mathcal{A}}=64$). This indicates that the nonzero error floor is caused by the amplitude information loss introduced by one-bit quantization in the high SNR regime, which is also called stochastic resonance phenomenon. Moreover, these results show that one way  to increase the estimation accuracy is to increase the ratio of high-resolution ADCs in mixed-precision architecture.

\begin{figure}[!tbp]
\centering
\includegraphics[width=0.4\textwidth]{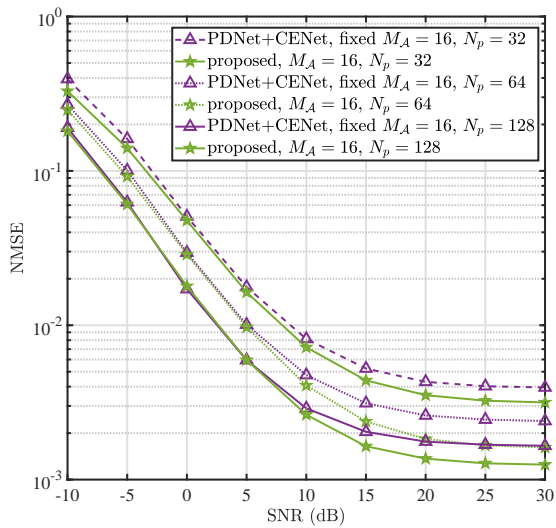}
\caption{$\mathrm{NMSE}$ versus SNR for different values of $N_{p}$, where $M=64$ and $M_{\mathcal{A}}=16$.}
\label{fig:NMSEpilot}
\end{figure}
Next, we study the impact of the pilot length $N_{p}$ on the channel estimation performance. 
Fig.~\ref{fig:NMSEpilot} displays the NMSE of the proposed method and "PDNet+CENet, fixed $M_{\mathcal{A}}=16$", where $M=64$ and the $16$ full-resolution ADCs are equispaced within the $M=64$ antennas for the "PDNet+CENet, fixed $M_{\mathcal{A}}=16$". It can been seen that the proposed end-to-end method outperforms "PDNet+CENet, fixed $M_{\mathcal{A}}=16$" for each value of $N_{p}$, which indicates that the proposed end-to-end architecture with SELNet can optimize the allocation of mixed-ADCs to further improve channel estimation accuracy. 
Moreover, the results also show that increasing the pilot length $N_{p}$ can improve the performances of both the methods. The reason is that increasing $N_{p}$ will increase the transmission energy for fixed  average transmission power $\rho$. 
Furthermore, we see that the proposed method for $N_{p}=64$ has the same performance with "PDNet+CENet, fixed $M_{\mathcal{A}}=16$, $N_{p}=128$" in the high SNR regime, which indicates that the pilot overhead can be reduced to half with optimized mixed-ADCs allocation. 

\begin{figure}[!tbp]
\centering
\includegraphics[width=0.4\textwidth]{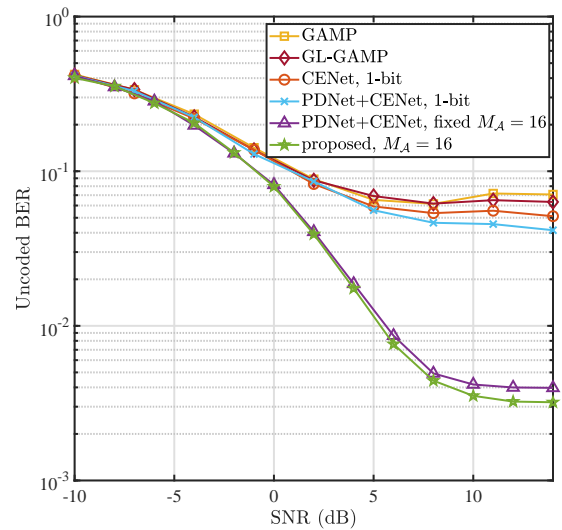}
\caption{BER versus SNR with different channel estimation results, where $M=64$, $M_{\mathcal{A}}=16$, and 16QAM is used.
\label{fig:BER16QAM}}
\end{figure}

For the last numerical study, we show the effects of different channel estimation methods on the data detection. Specifically, during payload data transmission, the received signal is 
\begin{equation}
\mathbf{y} = \mathcal{Q}\big(\mathbf{H} \mathbf{x} + \mathbf{w}\big),
\end{equation}
where $\mathbf{x}\in {\mathbb C}^{K}$, $\Vert \mathbf{x} \Vert^{2}_{2} = \rho$, and $\mathbf{w}\in {\mathbb C}^{M}$ is Gaussian noise with zero mean and variance $\sigma^2$. Similar to \eqref{eq:RxVectorReal}, the real-valued form of the received signal is given by
$
\widetilde{\mathbf{y}} = \mathcal{Q}\big(\widetilde{\mathbf{H}} \widetilde{\mathbf{x}} + \widetilde{\mathbf{w}} \big).
$
The maximum likelihood (ML) detector for $\widetilde{\mathbf{x}}$ is formulated as 
\begin{equation}\label{eq:ML}
\argmax_{\hat{\mathbf{x}}\in\mathcal{S}} \prod_{i=1}^{\widetilde{\mathcal{B}}} \Phi\left( \sqrt{\frac{2}{\sigma^2}} \widetilde{y}_{i} \widetilde{\mathbf{h}}_{i}^{T} \hat{\mathbf{x}} \right)   \prod_{j=1}^{\widetilde{\mathcal{A}}} \frac{1}{\sqrt{\pi} \sigma } e^{-\frac{(\widetilde{\mathbf{h}}_{j}^{T} \hat{\mathbf{x}} - \widetilde{y}_{j} )^{2}}{\sigma^2}},
\end{equation}
where $\mathcal{S}$ is the constellation set, $\widetilde{\mathbf{h}}_{i}^{T}$ is the $i$th row of $\widetilde{\mathbf{H}}$, $\widetilde{\mathcal{A}}=\{\mathcal{A},\mathcal{A}+M\}$ and $\widetilde{\mathcal{B}}=\{\mathcal{B},\mathcal{B}+M\}$. The exhaustive search over $\mathcal{S}$ is required to solve \eqref{eq:ML}, which is computationally intensive. Inspired by \cite{nML}, we relax the constraint $\hat{\mathbf{x}}\in\mathcal{S}$  as
\begin{equation}\label{eq:nML}
\argmax_{\hat{\mathbf{x}}\in {\mathbb R}^{2K} \atop \Vert \hat{\mathbf{x}} \Vert^{2}_{2} \leq \rho} \prod_{i=1}^{\widetilde{\mathcal{B}}} \Phi\left( \sqrt{\frac{2}{\sigma^2}} \widetilde{y}_{i} \widetilde{\mathbf{h}}_{i}^{T} \hat{\mathbf{x}} \right)   \prod_{j=1}^{\widetilde{\mathcal{A}}} \frac{1}{\sqrt{\pi} \sigma } e^{-\frac{(\widetilde{\mathbf{h}}_{j}^{T} \hat{\mathbf{x}} - \widetilde{y}_{j} )^{2}}{\sigma^2}}.
\end{equation}
Note that \eqref{eq:nML} is a convex optimization problem, where the objective is log-concave and the constraint is convex. Hence, we use the aforementioned channel estimation methods to obtain $\mathbf{H}$, and then apply CVX toolbox to solve \eqref{eq:nML}, where SNR (dB) is defined as $10\log_{10}\left( \frac{\rho}{\sigma^2} \right)$. 

Fig.~\ref{fig:BER16QAM} plots the uncoded bit error rate (BER) of the detector in \eqref{eq:nML} with different channel estimation results, where $M=64$, $M_{\mathcal{A}}=16$, and 16QAM is used. 
Comparing Fig.~\ref{fig:BER16QAM} with Fig.~\ref{fig:NMSE16}, we find that higher channel estimation accuracy leads to lower BER. 
Moreover, there is a remarkable performance gap between one-bit and mixed-precision channel estimation, which shows the superiority of the mixed-ADCs architecture. 
Furthermore, it can be seen that the proposed end-to-end estimation method outperforms "PDNet+CENet, fixed $M_{\mathcal{A}}=16$", which shows the effectiveness of the mixed-ADCs allocation optimization. 
Meanwhile, there are nonzero error floors for all methods in the high SNR regime, since the amplitude information is lost during one-bit quantization.

\section{Conclusions} \label{sec:con}
In this paper, by applying DL methods, we investigate the joint pilot design and channel estimation as well as  mixed-ADCs allocation problem for mmWave massive MIMO. 
For the channel estimator, we proposed a Runge-Kutta model-driven densely connected network that can alleviate the vanishing-gradient problem. 
We devise a pilot design network where the optimized pilots are obtained directly from the trained weights. 
Moreover, for the mixed-ADCs allocation optimization, we develop a selection network to choose the antennas for mixed-ADCs allocation.  
Furthermore, we adopt an autoencoder-based end-to-end architecture to jointly train these networks.
Numerical results have been carried out to show the superior performance of the proposed methods in channel estimation.


\bibliographystyle{IEEEtran}
\bibliography{IEEEabrv,ADCpilot}

\end{document}